\newtheorem{theorem}{Theorem}
\newtheorem{lemma}{Lemma}
\newtheorem{proposition}{Proposition}
\begin{document}

\title{Rate Analysis of Ultra-Reliable Low-Latency Communications in Random Wireless Networks}


\author{Jeonghun~Park
\thanks{J. Park is with the School of Electronics Engineering, College of IT Engineering, Kyungpook National University, Daegu, 41566, South Korea (e-mail: jeonghun.park@knu.ac.kr).

This work was supported by Electronics and Telecommunications Research Institute (ETRI) grant funded by the Korean government [2019-0-00964, Development of Incumbent Radio Stations Protection and Frequency Sharing Technology through Spectrum Challenge]. 
}}

\maketitle \setcounter{page}{1} 

\begin{abstract}
In this letter, we analyze the achievable rate of ultra-reliable low-latency communications (URLLC) in a randomly modeled wireless network. We use two mathematical tools to properly characterize the considered system: i) stochastic geometry to model spatial locations of the transmitters in a network, and ii) finite block-length analysis to reflect the features of the short-packets. Exploiting these tools, we derive an integral-form expression of the decoding error probability as a function of the target rate, the path-loss exponent, the communication range, the density, and the channel coding length. We also obtain a tight approximation as a closed-form. The main finding from the analytical results is that, in URLLC, increasing the signal-to-interference ratio (SIR) brings significant improvement of the rate performance compared to increasing the channel coding length. Via simulations, we show that fractional frequency reuse improves the area spectral efficiency by reducing the amount of mutual interference. 
\end{abstract}

\section{Introduction}
As new applications based on machine-to-machine communications have emerged, latency and reliability have became keys to support those applications.  
For example, in vehicle-to-everything (V2X) communications or industrial automation, a particular set of messages, e.g., safety messages, should be delivered within very low latency, say less than a few milliseconds, while achieving very high reliability, say more than $99.999\%$ \cite{osseiran:commag:14}. 
Motivated by this, a new design of communication systems so called ultra-reliable low-latency communications (URLLC) has been gaining attention.

In URLLC, the packet size becomes small to reduce the end-to-end latency, and this results in decreasing of the channel coding length \cite{durisi:procdieee:16}. In this regime, the classical Shannon capacity is not appropriate to characterize the achievable rate since the non-negligible gap occurs between the Shannon capacity and the actual rate. 
To take this gap into account, in \cite{poly:tit:10, yang:tit:14}, the achievable rate in a finite block-length regime was derived as a function of the decoding error probability, the channel coding length, and the signal-to-noise ratio (SNR). 
Leveraging this result, the rate performance of URLLC has been actively investigated in prior work. 
In \cite{schiessl:tcom:18}, the decoding error probability was analyzed in imperfect channel state information, and the optimum training sequence length to minimize the delay violation probability was found. 
This result was extended by incorporating a multi-user multiple-input single-output (MISO) system in \cite{schiessl:jsac:19}. 
In \cite{makki:wcl:14}, the achievable rate by using incremental redundancy hybrid automatic repeat request (HARQ) was analyzed in the finite block-length regime. 
In \cite{makki:wcl:19}, the MIMO achievable rate in the finite block-length regime was studied. 

One common assumption of the aforementioned prior work is that a link-level perspective is employed, where a few communication nodes were assumed and their spatial locations are deterministic. 
This perspective, however, is limited to obtain network-wise insights by averaging the spatial locations of the nodes in large-scale wireless networks. 
Stochastic geometry is a useful tool to investigate the network-wise characteristics, yet the prior work on stochastic geometry \cite{andrews:commag:10} implicitly assumed long channel coding length. As explained, this is not desirable to analyze URLLC. 

In this letter, we analyze the rate performance of URLLC in a random wireless network. 
Specifically, we consider a spectrum sharing wireless network, where each transmitter equipped with a single antenna is distributed as a homogeneous Poisson point process (PPP). Each transmitter sends data to an associated receiver equipped with a single antenna.
We assume that the number of channel use is in an order of ${\sim}100$, thereby the considered system is in the finite block-length regime. 
By jointly exploiting the tools from stochastic geometry and finite block-length regime analysis, we obtain an integral expression of the decoding error probability as a function of i) the target rate, ii) the path-loss exponent, iii) the communication range, iv) the channel coding length, and v) the density. 
To provide more insights, we also derive a tight approximation of the decoding error probability as a closed form. We verify the accuracy of the obtained approximation numerically. 
The major finding from our analytical results is that, in URLLC, increasing the signal-to-interference ratio (SIR) brings significant improvement of the rate performance compared to increasing the channel coding length. 
It is also shown that fractional frequency reuse brings significant rate performance gains by mitigating the amount of interference. 

\section{System Model}

\subsection{Network Model}
A spectrum-sharing wireless ad-hoc network is considered. 
We assume that each transmit and receiver node is equipped with a single antenna.
The locations of the transmitters, denoted as $\{{\bf{d}}_i, i\in \mathbb{N} \}$, are distributed as a homogeneous PPP with density $\lambda$. We write the set of the locations as $\Phi = \{{\bf{d}}_i, i \in \mathbb{N}\}$. 
The receiver located at ${\bf{d}}^{\rm o}_{i}$ is associated with the transmitter located at ${\bf{d}}_i$, so that the transmitter and the receiver pair is located at $\left({\bf{d}}_i,{\bf{d}}_i^{\rm o}\right)$. 

Per Slivnyak's theorem \cite{baccelli:inria}, we focus on the typical receiver located at ${\bf{d}}_1^{\rm o} = {\bf{0}}$. The received signal at the typical receiver is described as
\begin{align}
y_1 = \left\| {\bf{d}}_1 \right\|^{-\beta/2} h_1 s_1 + \sum_{i = 2}^{\infty} \left\| {\bf{d}}_i \right\|^{-\beta/2}h_i s_i + n_1,
\end{align}
where $h_i$ is a fading coefficient between the transmitter at ${\bf{d}}_i$ and the typical receiver, $s_i$ is a transmit signal sent from the transmitter at ${\bf{d}}_i$, and $n_1 \sim \mathcal{CN}(0, 1)$ is the additive Gaussian noise. We consider Rayleigh fading, so that $h_i \sim \mathcal{CN}(0,1)$. 
Assuming that the transmit power is $P$, the SINR of the typical receiver, denoted as $\gamma$, is defined as
\begin{align}
 \gamma = \frac{\left\| {\bf{d}}_1 \right\|^{-\beta} |h_1|^2 }{\sum_{i = 2}^{\infty} \left\| {\bf{d}}_i \right\|^{-\beta} |h_i|^2 + \frac{1}{P}}
\end{align}

\subsection{Finite Block-length Regime}

As shown in \cite{poly:tit:10}, the achievable rate when the decoding error probability is $\varepsilon$ and the channel coding length is $n$ is well approximated as
\begin{align} \label{eq:r_fbl}
R \approx \log_2\left(1 + \gamma \right) - \sqrt{\frac{\CMcal{V}_{}(\gamma)}{n}} Q^{-1} \left(\varepsilon	\right)
\end{align}
where $\CMcal{V}_{}(\cdot)$ is the channel dispersion determined by a channel condition. For example, in AWGN channels, this channel dispersion is given as
\begin{align} \label{eq:ch_disper_awgn}
\CMcal{V}(\gamma) = \CMcal{V}_{\sf AWGN}(\gamma) = \left(1 - \frac{1}{(1 + \gamma)^2} \right) \log_2^2(e).
\end{align}
To achieve \eqref{eq:r_fbl} with $\CMcal{V}_{\sf AWGN}(\gamma)$, the transmitter needs to use a non-Gaussian codebook \cite{poly:tit:10}. 
Treating interference as noise, each receiver experiences non-Gaussian interference, so that the corresponding channel dispersion is not equivalent to AWGN. For this reason, we cannot use \eqref{eq:ch_disper_awgn}. 
Instead, we use the results of \cite{scarlett:tit:17}. In \cite{scarlett:tit:17}, assuming that an interference channel where each link uses iid Gaussian codebook and employs nearest-neighbor decoding, the channel dispersion was derived as 
\begin{align} \label{eq:ch_disper_iid}
\CMcal{V}_{\sf iid} (\gamma) = \frac{2  \gamma}{1 + \gamma} \log_2^2(e). 
\end{align}
We note that even though the considered setting, i.e., using point-to-point iid Gaussian codebook in an interference channel, is not optimal, but it sheds light on performance insights in a practical setup.  


We assume a quasi-static channel condition, where channel coefficients (including long-term and short-term coefficients) keep constant during the whole communication process. This makes sense particularly in URLLC since it spans only small transmission slots.  

\subsection{Performance Metrics}


Under the assumed system setups, the decoding error probability is tightly approximated as \cite{scarlett:tit:17}
\begin{align} \label{eq:error_int}
\varepsilon \approx \mathbb{E}_{\gamma} \left[ Q \left( \frac{\log_2 \left( 1+ \gamma \right) - R}{\sqrt{\CMcal{V}_{\sf iid}(\gamma)/n}} \right) \right],
\end{align}
where the expectation is taken over $\gamma$. We note that there exist two randomnesses that determines $\gamma$: i) the interfering nodes' locations $\Phi \backslash {\bf{d}}_1$, ii) the short-term fading coefficients $h_i$ for $i \in \mathbb{N}$. 

Given the decoding error probability $\varepsilon$, we define $R_{\varepsilon}$ as the maximum rate that the corresponding error probability is smaller than $\varepsilon$. This is formally defined as
\begin{align}
R_{\varepsilon} = {\arg \max}_{R} \left\{\mathbb{E}_{\gamma} \left[ Q \left( \frac{\log_2 \left( 1+ \gamma \right) - R}{\sqrt{\CMcal{V}_{\sf iid}(\gamma)/n}} \right) \right] < \varepsilon \right\}. 
\end{align}
With $R_{\varepsilon}$ and $\varepsilon$, the area spectral efficiency measures the achievable rate per channel use in a unit area. This is defined as $\lambda R_{\varepsilon}(1-\varepsilon)$. 






\section{Analytical Results}

In this section, we present the main analytical results of this letter. 
We first obtain the probability density function (PDF) of $\gamma$ in the following lemma. 
\begin{lemma} \label{lem:pdf}
In a network modeled by a homogeneous PPP $\Phi$, the PDF of $\gamma$ is obtained as
\begin{align}
f_{\gamma} (x) = &\exp\left(-\pi \lambda x^{2/\beta} D^2 \frac{2\pi / \beta}{\sin \left(2\pi / \beta \right) } - x \frac{D^{\beta}}{P} \right)\cdot \nonumber \\
& \left( \frac{D^{\beta}}{P} + \frac{2}{\beta} \pi \lambda D^2 \frac{2\pi / \beta}{\sin \left(2\pi / \beta \right) } x^{-1 + \frac{2}{\beta}} \right),
\end{align}
where $D = \left\|{\bf{d}}_1 \right\|$, i.e., the link distance between the typical receiver and the corresponding transmitter. 
 \end{lemma}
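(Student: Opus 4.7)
The plan is to derive the PDF by first computing the complementary cumulative distribution function (CCDF) of $\gamma$ conditioned on the link distance $D$, and then differentiating. Because $h_1\sim\mathcal{CN}(0,1)$, we have $|h_1|^2\sim\mathrm{Exp}(1)$, which is the reason the whole computation collapses to a Laplace transform. Writing $I=\sum_{i\geq 2}\|\mathbf{d}_i\|^{-\beta}|h_i|^2$ for the aggregate interference, the condition $\gamma>x$ is equivalent to $|h_1|^2 > x D^{\beta}(I+1/P)$, and taking expectation over $|h_1|^2$ first yields
\begin{align*}
\bar F_\gamma(x) = \mathbb{E}\!\left[\exp\!\bigl(-x D^{\beta}(I+1/P)\bigr)\right] = \exp\!\Bigl(-x\tfrac{D^{\beta}}{P}\Bigr)\,\mathcal{L}_I\!\bigl(xD^{\beta}\bigr),
\end{align*}
since the noise term $1/P$ is deterministic and factors out.

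Next I would evaluate $\mathcal{L}_I(s)=\mathbb{E}[e^{-sI}]$ via the probability generating functional (PGFL) of the homogeneous PPP $\Phi$, applied after taking the expectation over the i.i.d.\ Rayleigh fading marks $|h_i|^2$. Using polar coordinates and the standard interference integral over $\mathbb{R}^2$, together with the identity
\begin{align*}
\int_0^{\infty}\!\frac{r}{1+r^{\beta}/s}\,dr = \frac{s^{2/\beta}}{2}\cdot\frac{2\pi/\beta}{\sin(2\pi/\beta)},
\end{align*}
which follows from the Beta-function substitution $u=1/(1+r^{\beta}/s)$ combined with Euler's reflection formula, gives
\begin{align*}
\mathcal{L}_I(s)=\exp\!\Bigl(-\pi\lambda\, s^{2/\beta}\,\tfrac{2\pi/\beta}{\sin(2\pi/\beta)}\Bigr).
\end{align*}
Substituting $s=xD^{\beta}$ produces the exponent $-\pi\lambda x^{2/\beta}D^{2}\,\tfrac{2\pi/\beta}{\sin(2\pi/\beta)}$, matching the exponential factor in the stated PDF.

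Finally, $f_\gamma(x)=-\tfrac{d}{dx}\bar F_\gamma(x)$. Since $\bar F_\gamma(x)=\exp\bigl(-g(x)\bigr)$ with
\begin{align*}
g(x)=x\tfrac{D^{\beta}}{P}+\pi\lambda D^{2}\,\tfrac{2\pi/\beta}{\sin(2\pi/\beta)}\,x^{2/\beta},
\end{align*}
differentiating gives $g'(x)=\tfrac{D^{\beta}}{P}+\tfrac{2}{\beta}\pi\lambda D^{2}\,\tfrac{2\pi/\beta}{\sin(2\pi/\beta)}\,x^{-1+2/\beta}$, and multiplying by $\exp(-g(x))$ reproduces the expression in the lemma.

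I expect the main obstacle to be the evaluation of the Laplace transform integral: the PGFL reduction is routine, but the closed form requires carefully handling the polar integral and invoking the reflection formula $\Gamma(1+2/\beta)\Gamma(1-2/\beta)=\tfrac{2\pi/\beta}{\sin(2\pi/\beta)}$ to land on exactly the constant appearing in the statement. Once that constant is identified, the remaining steps (conditioning on $D$, the Rayleigh trick that turns the CCDF into an MGF, and the derivative to obtain the PDF) are mechanical.
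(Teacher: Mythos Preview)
Your proposal is correct and follows the same overall strategy as the paper: obtain the CCDF of $\gamma$ and differentiate. The only difference is that the paper simply cites the CDF \eqref{eq:cdf} as ``well known'' and proceeds directly to the derivative, whereas you supply the underlying derivation (Rayleigh-fading trick, PGFL, and the reflection-formula evaluation of the interference Laplace transform); your extra detail is standard stochastic-geometry machinery that the paper takes for granted.
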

\begin{proof}
It is well known that the cumulative distribution function (CDF) of $\gamma$ is obtained as 
\begin{align} \label{eq:cdf}
\mathbb{P}\left[\gamma < x \right] = 1- \exp\left(-\pi \lambda x^{2/\beta} D^2 \frac{2\pi / \beta}{\sin \left(2\pi / \beta \right) } - x \frac{D^{\beta}}{P} \right),
\end{align}
where $D = \left\|{\bf{d}}_1 \right\|$. Differentiating \eqref{eq:cdf} with regard to $x$, we have
\begin{align}
&\frac{\partial \mathbb{P}\left[\gamma < x \right]}{\partial x} =  f_{\gamma} (x) = \nonumber \\
&\exp\left(-\pi \lambda x^{2/\beta} D^2 \frac{2\pi / \beta}{\sin \left(2\pi / \beta \right) } - x \frac{D^{\beta}}{P} \right)\cdot \nonumber \\
&  \left( \frac{D^{\beta}}{P} + \frac{2}{\beta} \pi \lambda D^2 \frac{2\pi / \beta}{\sin \left(2\pi / \beta \right) } x^{-1 + \frac{2}{\beta}} \right).  \nonumber 
\end{align}
This completes the proof. 
\end{proof}



With Lemma \ref{lem:pdf}, we obtain the decoding error probability $\varepsilon$ as follows. 

\begin{theorem} \label{thm:error_prob}
In a network modeled by a homogeneous PPP $\Phi$, the decoding error probability is 
\begin{align} \label{eq:error_prob_int}
\varepsilon \approx \int_{0}^{\infty} &Q \left( \frac{\log_2 \left( 1+ x \right) - R}{\sqrt{V_{\sf iid}(x)/n}} \right) e^{\left(-\pi \lambda x^{2/\beta} D^2 \frac{2\pi / \beta}{\sin \left(2\pi / \beta \right) } - x \frac{D^{\beta}}{P} \right)}\cdot \nonumber \\
& \left( \frac{D^{\beta}}{P} + \frac{2}{\beta} \pi \lambda D^2 \frac{2\pi / \beta}{\sin \left(2\pi / \beta \right) } x^{-1 + \frac{2}{\beta}} \right)  {\rm d} x.
\end{align}
\end{theorem}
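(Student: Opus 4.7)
The proof is essentially an application of Lemma \ref{lem:pdf} to the approximation in \eqref{eq:error_int}, so my plan is to make that reduction rigorous and identify where each piece comes from.

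First I would start from \eqref{eq:error_int}, which already expresses $\varepsilon$ as an expectation $\mathbb{E}_{\gamma}\!\left[Q\!\left((\log_2(1+\gamma)-R)/\sqrt{\CMcal{V}_{\sf iid}(\gamma)/n}\right)\right]$ taken jointly over the interferer process $\Phi \backslash {\bf{d}}_1$ and the fading coefficients $\{h_i\}$. The key observation is that both of these randomnesses enter $\varepsilon$ only through the scalar SINR $\gamma$, so by the law of the unconscious statistician the joint expectation collapses to a single integral $\int_0^\infty Q(\cdot)\, f_\gamma(x)\, \mathrm{d}x$ against the PDF of $\gamma$. The lower limit is $0$ since $\gamma \geq 0$ almost surely, and the upper limit is $\infty$ because the CDF in \eqref{eq:cdf} approaches $1$ only as $x \to \infty$.

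Second, I would substitute the closed-form $f_\gamma(x)$ from Lemma \ref{lem:pdf} directly into this integral. The exponential factor carries both the interference contribution (the $\pi \lambda x^{2/\beta} D^2 \cdot (2\pi/\beta)/\sin(2\pi/\beta)$ term, which originates from the Laplace transform of the shot-noise interference of the PPP under Rayleigh fading) and the noise contribution (the $x D^\beta / P$ term from the exponential distribution of $|h_1|^2$). The bracketed polynomial factor is just the derivative of the argument of the exponential with respect to $x$, arising from differentiating \eqref{eq:cdf}.

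The only subtlety worth flagging is the $\approx$ in the theorem statement: it is inherited entirely from \eqref{eq:error_int}, which itself is the finite block-length approximation of \cite{scarlett:tit:17} combined with the channel-dispersion choice $\CMcal{V}_{\sf iid}$ from \eqref{eq:ch_disper_iid}. Everything downstream (the expectation-to-integral step and the PDF substitution) is exact, so no additional approximation error is incurred beyond that of \eqref{eq:error_int}. I do not anticipate a significant obstacle; the main care is just in verifying that Fubini applies so that the joint expectation over $\Phi$ and $\{h_i\}$ can indeed be reduced to a one-dimensional integral against $f_\gamma$, which is immediate since the $Q$-function integrand is bounded by $1$.
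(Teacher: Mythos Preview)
Your proposal is correct and follows exactly the paper's approach: the paper's own proof is a single sentence stating that the result is obtained by integrating \eqref{eq:error_int} against the PDF $f_\gamma(x)$ of Lemma~\ref{lem:pdf}. Your additional remarks on Fubini, the boundedness of the $Q$-function, and the origin of the $\approx$ symbol are all valid and merely flesh out what the paper leaves implicit.
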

\begin{proof}  
The proof is straightforward by computing integration \eqref{eq:error_int} with regard to the PDF $f_{\gamma}(x)$. 
\end{proof}

\begin{figure}[!t]
\centerline{\resizebox{0.8\columnwidth}{!}{\includegraphics{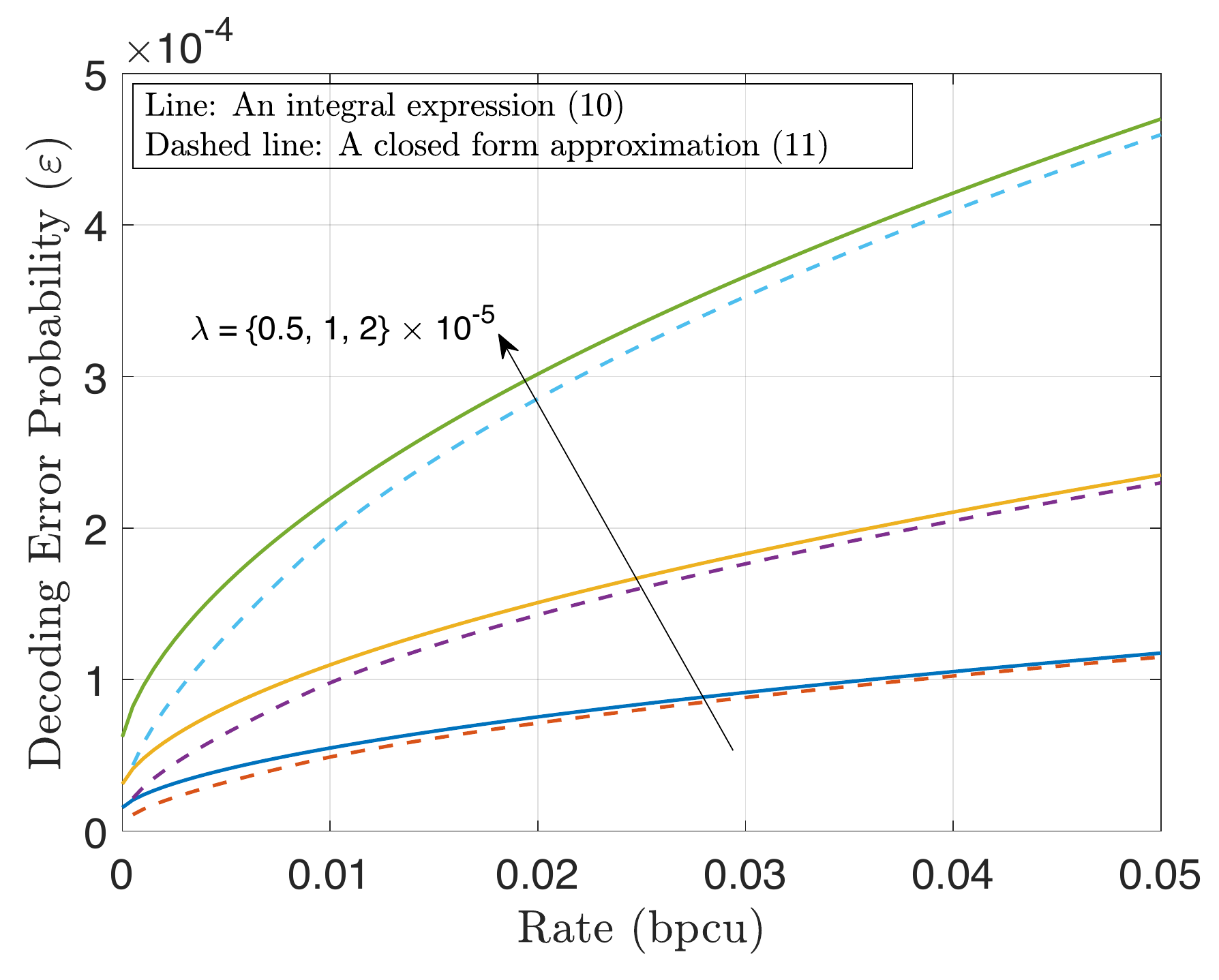}}}     
\caption{A verification plot of a closed-form approximation \eqref{eq:epsilon_approx_total}. The simulation setups are as follows: $\lambda = \{0.5, 1, 2\} \times 10^{-5}$, $n = 500$, $D = 5$, and $\beta = 4$. }
 \label{fig:verify}
\end{figure} 

Unfortunately, a closed form of \eqref{eq:error_prob_int} does not exist. 
For providing more insights, we approximate Theorem \ref{thm:error_prob}. 
We first assume that the typical receiver is interference-limited, i.e., $1/P \rightarrow 0$. We present the proposed approximation in the following.

\begin{figure*}[!]
\begin{align} \label{eq:epsilon_approx_total}
\varepsilon \approx & \;1 + 
\left(\frac{1}{2} + \sqrt{\frac{n (2^R - 1)}{4\pi 2^R}} \right) \cdot  e^{\left(- \pi \lambda D^2 \frac{2\pi/\beta}{\sin (\pi/\beta)} \left(2^{R}-1+\sqrt{\frac{{\pi (2^{2R} - 2^R)}}{{n}}}\right)^{\frac{2}{\beta}} \right)}   \nonumber \\
&  -\left(\frac{3}{2} + \sqrt{\frac{n (2^R - 1)}{4\pi 2^R}} \right) \cdot   e^{\left(- \pi \lambda D^2 \frac{2\pi/\beta}{\sin (\pi/\beta)} \left(2^{R}-1-\sqrt{\frac{{\pi (2^{2R} - 2^R)}}{{n}}}\right)^{\frac{2}{\beta}} \right)} - \left( \frac{2}{\beta}\sqrt{\frac{n}{4\pi (2^{2R} - 2^R)}} \pi \lambda D^2 \frac{2\pi/\beta}{\sin (2\pi/\beta)} \right) \cdot \nonumber \\
&  \left\{\left( 2^{R}-1-\sqrt{\frac{{\pi (2^{2R} - 2^R)}}{{n}}} \right) ^{1+\frac{\beta}{2}} E_{-\beta/2}\left(\left( 2^{R}-1-\sqrt{\frac{{\pi (2^{2R} - 2^R)}}{{n}}} \right) \pi \lambda D^2 \frac{2\pi/\beta}{\sin (2\pi/\beta)}\right)  \right. \nonumber \\
& \left.- \left(2^{R}-1+\sqrt{\frac{{\pi (2^{2R} - 2^R)}}{{n}}} \right)^{1 + \beta/2}E_{-\beta/2}\left(\left( 2^{R}-1+\sqrt{\frac{{\pi (2^{2R} - 2^R)}}{{n}}} \right) \pi \lambda D^2 \frac{2\pi/\beta}{\sin (2\pi/\beta)}\right) \right\}.
\end{align}
\hrulefill
\end{figure*}	

\begin{proposition} \label{prop:approx}
Assuming that the typical receiver is interference-limited, i.e., $1/P \rightarrow 0$, the decoding error probability $\varepsilon$ is approximated as in \eqref{eq:epsilon_approx_total}. 
\end{proposition}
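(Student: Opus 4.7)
With $1/P \to 0$, Lemma~\ref{lem:pdf} reduces to $f_{\gamma}(x) = (2K/\beta)\, x^{2/\beta-1}\, e^{-K x^{2/\beta}}$ and $F_{\gamma}(x) = 1 - e^{-K x^{2/\beta}}$, where $K = \pi\lambda D^2\,(2\pi/\beta)/\sin(2\pi/\beta)$. The plan is to (i) locally linearize the argument of $Q$ in \eqref{eq:error_prob_int} around the point where its numerator vanishes, (ii) replace $Q$ by the standard piecewise-linear template used in the finite-blocklength literature, and (iii) evaluate the resulting elementary integrals against $f_\gamma$, expressing a single residual integral through the generalized exponential integral $E_{-\beta/2}$.

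\textbf{Linearization and breakpoints.} Set $x_0 = 2^R-1$, the root of the numerator $\log_2(1+x)-R$. A first-order Taylor expansion $\log_2(1+x) \approx R + (x-x_0)/(2^R \ln 2)$ together with freezing the dispersion $V_{\sf iid}(x) \approx V_{\sf iid}(x_0) = 2(1-2^{-R})\log_2^2 e$ collapses the Q-argument to $g(x) \approx (x-x_0)\sqrt{n/[2(2^{2R}-2^R)]}$. Applying the approximation $Q(u)\approx 1$ for $u\le -\sqrt{\pi/2}$, $Q(u)\approx 1/2 - u/\sqrt{2\pi}$ for $|u|<\sqrt{\pi/2}$, and $Q(u)\approx 0$ otherwise, gives transition points
\begin{equation*}
x_\pm = 2^R-1 \pm \sqrt{\pi(2^{2R}-2^R)/n},
\end{equation*}
which are exactly the quantities inside the exponentials and the $E_{-\beta/2}(\cdot)$ arguments of \eqref{eq:epsilon_approx_total}. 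The induced slope $c := \sqrt{n/[4\pi(2^{2R}-2^R)]}$ gives $c\,x_0 = \sqrt{n(2^R-1)/(4\pi\, 2^R)}$, matching the constants multiplying the two exponentials in \eqref{eq:epsilon_approx_total}.

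\textbf{Three-region integration.} Splitting $\int_0^\infty$ at $x_\pm$: on $[0,x_-]$ the integrand is $f_\gamma$ and contributes $F_\gamma(x_-)=1-e^{-Kx_-^{2/\beta}}$; on $[x_+,\infty)$ it vanishes; on $[x_-,x_+]$ it equals $[1/2 - c(x-x_0)]\,f_\gamma(x)$. The middle piece becomes a linear combination of $F_\gamma(x_+) - F_\gamma(x_-)$ and $\int_{x_-}^{x_+} x\, f_\gamma(x)\,dx$. Integration by parts on the latter, using $-e^{-Kx^{2/\beta}}$ as an antiderivative of $f_\gamma$, produces boundary terms $x_\pm e^{-K x_\pm^{2/\beta}}$ and a residual $\int_{x_-}^{x_+} e^{-K x^{2/\beta}}\,dx$. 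The substitution $t = (x/y)^{2/\beta}$ rewrites $\int_y^\infty e^{-K x^{2/\beta}}\,dx$ as $\tfrac{\beta y}{2}\int_1^\infty t^{\beta/2-1} e^{-K y^{2/\beta} t}\,dt$, which by the identity $\Gamma(a,z) = z^a E_{1-a}(z)$ is expressible in the $y^{1+\beta/2}E_{-\beta/2}(\cdot)$ form appearing in \eqref{eq:epsilon_approx_total}. Collecting the $F_\gamma(x_\pm)$, the boundary terms from the by-parts step, and the two exponential-integral terms at $y=x_\pm$ assembles \eqref{eq:epsilon_approx_total}.

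\textbf{Main obstacle.} The principal technical point is justifying that the composed approximation of $Q(g(x))$ stays tight after being integrated against $f_\gamma$. In the URLLC regime the density is concentrated near $x_0$ and the window $[x_-, x_+]$ is narrow relative to the scales on which $V_{\sf iid}$ and the derivative of $\log_2(1+x)$ vary, so both linearizations are accurate on the effective support; a quantitative bound would combine a second-order Taylor remainder for $g$ with the $1/\sqrt{2\pi}$-Lipschitz slope of the piecewise-linear $Q$. The remainder is careful bookkeeping, matching the numerical coefficients $(1/2+cx_0)$, $(3/2+cx_0)$ and $(2/\beta)\,c\,K$ against the three distinct pieces (CDF contributions, boundary terms, and the $E_{-\beta/2}$ residual).
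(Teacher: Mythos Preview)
Your proposal is correct and follows essentially the same route as the paper: linearize the $Q$-function around $x_0=2^R-1$ to obtain the piecewise-linear template with breakpoints $A,B$ (your $x_\mp$), integrate the three pieces against the interference-limited density $f_\gamma(x)=(2K/\beta)x^{2/\beta-1}e^{-Kx^{2/\beta}}$, and express the residual $\int x f_\gamma\,dx$ contribution through the generalized exponential integral $E_{-\beta/2}$. The only difference is expository: you spell out the integration-by-parts and the substitution producing the $y^{1+\beta/2}E_{-\beta/2}(\cdot)$ form, whereas the paper simply states the evaluated pieces \eqref{eq:epsilon_approx_pt1}--\eqref{eq:epsilon_approx_pt2}.
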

\begin{proof}
To obtain a closed-form expression, we first approximate the $Q$-function as a piecewise linear function as follows.
\begin{align} \label{eq:Q_taylor}
&Q\left( \frac{\log_2 \left( 1+ x \right) - R}{\sqrt{V_{\sf iid}(x)/n}} \right) \approx  \nonumber \\
& \left\{\begin{array}{ll} {1,} & {x \le A,} \\ {\frac{1}{2} - \frac{\sqrt{n}}{\sqrt{4 \pi \cdot(2^{2R}-2^{R})}} (x-(2^{R}-1)),} & {A \le x \le B,} \\ {0,} & {x \ge B,}   \end{array} \right. 
\end{align}
where $A = 2^{R}-1-\sqrt{\frac{{\pi (2^{2R} - 2^R)}}{{n}}}$ and $B = 2^{R}-1+\sqrt{\frac{{\pi (2^{2R} - 2^R)}}{{n}}}$. The approximation in the range of $A \le x \le B$ comes from the Taylor approximation at $2^{R}-1$. 
With the approximation \eqref{eq:Q_taylor}, we calculate the integration \eqref{eq:error_prob_int} as 
\begin{align} \label{eq:epsilon_approx}
& \varepsilon \approx  \int_{0}^{A} e^{\left(-\pi \lambda x^{2/\beta} D^2 \frac{2\pi/\beta}{\sin (2\pi/ \beta)} \right)} \left( \frac{2}{\beta} \pi \lambda D^2 \frac{2\pi/\beta}{\sin (2\pi/ \beta )} x^{-1+2/\beta} \right)  \nonumber \\
&+  \int_{A}^{B} e^{\left(-\pi \lambda x^{2/\beta} D^2 \frac{2\pi/\beta}{\sin (2\pi/ \beta)} \right)} \left( \frac{2}{\beta} \pi \lambda D^2 \frac{2\pi/\beta}{\sin (2\pi/ \beta )} x^{-1+2/\beta} \right) \cdot \nonumber \\
&   \left( \frac{1}{2} - \frac{\sqrt{n}}{\sqrt{4 \pi \cdot(2^{2R}-2^{R})}} (x-(2^{R}-1))\right) {\rm d} x.
\end{align}

The first term of \eqref{eq:epsilon_approx} is derived as
\begin{align} \label{eq:epsilon_approx_pt1}
1 -  e^{\left(- \pi \lambda D^2 \frac{2\pi/\beta}{\sin (\pi/\beta)} \left(A\right)^{2/\beta} \right)}.
\end{align}
Subsequently, the second term of \eqref{eq:epsilon_approx} is obtained as
\begin{align} \label{eq:epsilon_approx_pt2}
&\left(\frac{1}{2} + \sqrt{\frac{n (2^R - 1)}{4\pi 2^R}} \right) \cdot \nonumber \\
& \left(e^{\left(- \pi \lambda D^2 \frac{2\pi/\beta}{\sin (\pi/\beta)} \left(B\right)^{2/\beta} \right)} - e^{\left(- \pi \lambda D^2 \frac{2\pi/\beta}{\sin (\pi/\beta)} \left(A\right)^{2/\beta} \right)} \right) \nonumber \\
&- \left( \frac{2}{\beta}\sqrt{\frac{n}{4\pi (2^{2R} - 2^R)}} \pi \lambda D^2 \frac{2\pi/\beta}{\sin (2\pi/\beta)} \right) \cdot \nonumber \\
& \left\{A^{1+\beta/2} E_{-\beta/2}\left(A \pi \lambda D^2 \frac{2\pi/\beta}{\sin (2\pi/\beta)}\right) \right. \nonumber \\
& \left. - B^{1 + \beta/2}E_{-\beta/2}\left(B \pi \lambda D^2 \frac{2\pi/\beta}{\sin (2\pi/\beta)}\right) \right\},
\end{align}
where $E_{n}(z)$ is the exponential integral function defined as
\begin{align}
E_{n}(z) = \int_{1}^{\infty} \frac{e^{-zt}}{t^n} {\rm d} t. 
\end{align}
Combining \eqref{eq:epsilon_approx_pt1} and \eqref{eq:epsilon_approx_pt2} together, we finally have \eqref{eq:epsilon_approx_total}. This completes the proof.
\end{proof}


We verify Proposition \ref{prop:approx} by comparing to an integral expression \eqref{eq:error_prob_int}. As observed in Fig.~\ref{fig:verify}, the gap between \eqref{eq:error_prob_int} and \eqref{eq:epsilon_approx_total} is reasonably small.

\section{Numerical Results}
In this section, we provide performance insights via simulations with our analytical results. The simulation setups are described in each caption. 

In Fig.~\ref{fig:3d_plot}, we illustrate $R_{\varepsilon}$ depending on the density $\lambda$ and the channel coding length $n$. 
On one hand, as $\lambda$ increases, the amount of mutual interference also increases, resulting in that the SIR decreases; thereby $R_{\varepsilon}$ also decreases. 
On the other hand, as $n$ decreases, the rate degradation term in \eqref{eq:r_fbl} increases; thereby $R_{\varepsilon}$ decreases. 
As shown in the figure, the achievable rate $R_{\varepsilon}$ sharply increases when the density $\lambda$ decreases, while increasing $n$ only slightly changes $R_{\varepsilon}$. 
This implies that, considering the network-wise perspective of URLLC, decreasing the interference amount is relatively beneficial compared to increasing the channel coding length. 

\begin{figure}[!t]
\centerline{\resizebox{0.9\columnwidth}{!}{\includegraphics{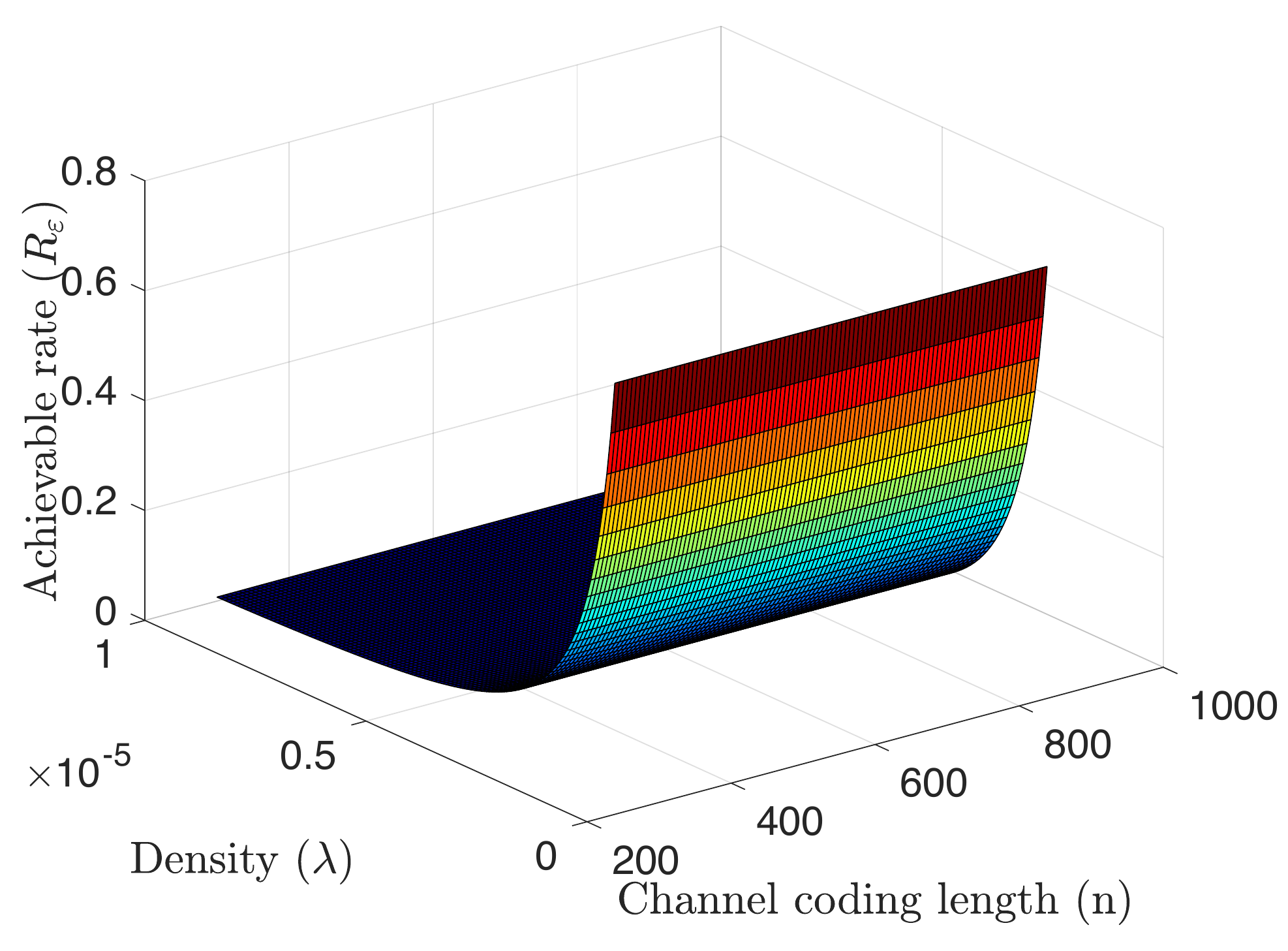}}}     
\caption{The achievable rate $R_{\varepsilon}$ for the decoding error constraint $\varepsilon = 10^{-4}$. The simulation setups are as follows: $\lambda = \{0.5, 1, 2\} \times 10^{-5}$, $n = 500$, $D = 5$, and $\beta = 4$. }
 \label{fig:3d_plot}
\end{figure} 



In Fig.~\ref{fig:fractional}, we illustrate the area spectral efficiency when using fractional frequency reuse. Denoting that the fractional frequency reuse factor as $\eta$, the bandwidth $W$ is separated to $\eta$ bins, and each node in the network is randomly allocated into one of the bins. Then, the effective density of each bin decreases as $\lambda/\eta$ and the channel coding length of each bin also deceases as $n/\eta$. 
In Fig.~\ref{fig:fractional}, we observe that it is beneficial to use fractional frequency reuse. Specifically, the optimal factor $\eta^{\star}$ increases as higher reliability is needed. As $\eta$ increases, the amount of the interference reduces while the channel coding length also decreases. 
The observation implies that when we require highly reliable communication, interference mitigation becomes more important even by sacrificing the channel coding length. 


\begin{figure}[!t]
\centerline{\resizebox{0.9\columnwidth}{!}{\includegraphics{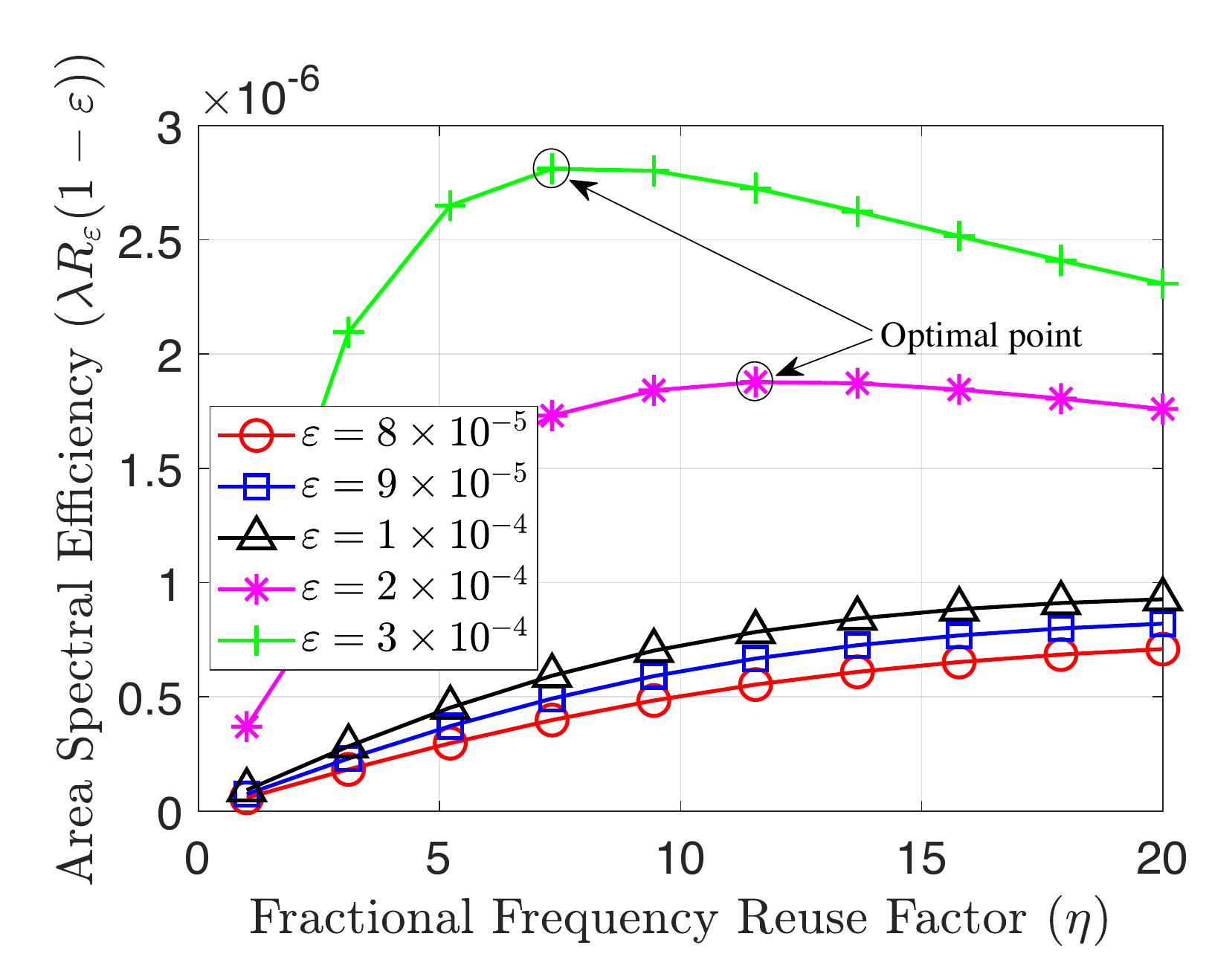}}}     
\caption{The achievable rate $R_{\varepsilon}$ for the decoding error constraint $\varepsilon = 10^{-4}$. The simulation setups are as follows: $\lambda = \{0.5, 1, 2\} \times 10^{-5}$, $n = 500$, $D = 5$, and $\beta = 4$. }
 \label{fig:fractional}
\end{figure}

\section{Conclusions}
In this letter, we have analyzed the achievable rate of URLLC in a random wireless network modeled with a homogeneous PPP. Exploiting the tools of stochastic geometry and finite block-length analysis, we have derived an integral expression of the achievable rate as a function of the decoding error probability, the path-loss exponent, the communication range, the channel coding length, and the density. We also have obtained a tight closed-form approximation. 
Using the analytical results, we have obtained a useful finding that interference mitigation is a key to achieve high rate in URLLC. This encourages to use fractional frequency reuse. 

As future work, it is possible to consider advanced interference mitigation strategies. For example, it is interesting to analyze the performance gains in URLLC by using multiple antennas \cite{park:twc:16}.

\bibliographystyle{IEEEtran}
\bibliography{ppp_shortcode}

\begin{thebibliography}{10}
\providecommand{\url}[1]{#1}
\csname url@samestyle\endcsname
\providecommand{\newblock}{\relax}
\providecommand{\bibinfo}[2]{#2}
\providecommand{\BIBentrySTDinterwordspacing}{\spaceskip=0pt\relax}
\providecommand{\BIBentryALTinterwordstretchfactor}{4}
\providecommand{\BIBentryALTinterwordspacing}{\spaceskip=\fontdimen2\font plus
\BIBentryALTinterwordstretchfactor\fontdimen3\font minus
  \fontdimen4\font\relax}
\providecommand{\BIBforeignlanguage}[2]{{%
\expandafter\ifx\csname l@#1\endcsname\relax
\typeout{** WARNING: IEEEtran.bst: No hyphenation pattern has been}%
\typeout{** loaded for the language `#1'. Using the pattern for}%
\typeout{** the default language instead.}%
\else
\language=\csname l@#1\endcsname
\fi
#2}}
\providecommand{\BIBdecl}{\relax}
\BIBdecl

\bibitem{osseiran:commag:14}
A.~{Osseiran}, F.~{Boccardi}, V.~{Braun}, K.~{Kusume}, P.~{Marsch},
  M.~{Maternia}, O.~{Queseth}, M.~{Schellmann}, H.~{Schotten}, H.~{Taoka},
  H.~{Tullberg}, M.~A. {Uusitalo}, B.~{Timus}, and M.~{Fallgren}, ``Scenarios
  for {5G} mobile and wireless communications: {The} vision of the {METIS}
  project,'' \emph{IEEE Comm. Mag.}, vol.~52, no.~5, pp. 26--35, May 2014.

\bibitem{durisi:procdieee:16}
G.~{Durisi}, T.~{Koch}, and P.~{Popovski}, ``Toward massive, ultrareliable, and
  low-latency wireless communication with short packets,'' \emph{Proceedings of
  the IEEE}, vol. 104, no.~9, pp. 1711--1726, Sep. 2016.

\bibitem{poly:tit:10}
Y.~{Polyanskiy}, H.~V. {Poor}, and S.~{Verdu}, ``Channel coding rate in the
  finite blocklength regime,'' \emph{IEEE Trans. Info. Th.}, vol.~56, no.~5,
  pp. 2307--2359, May 2010.

\bibitem{yang:tit:14}
W.~{Yang}, G.~{Durisi}, T.~{Koch}, and Y.~{Polyanskiy}, ``Quasi-static
  multiple-antenna fading channels at finite blocklength,'' \emph{IEEE Trans.
  Info. Th.}, vol.~60, no.~7, pp. 4232--4265, Jul. 2014.

\bibitem{schiessl:tcom:18}
S.~{Schiessl}, H.~{Al-Zubaidy}, M.~{Skoglund}, and J.~{Gross}, ``Delay
  performance of wireless communications with imperfect {CSI} and finite-length
  coding,'' \emph{IEEE Trans. Comm.}, vol.~66, no.~12, pp. 6527--6541, Dec.
  2018.

\bibitem{schiessl:jsac:19}
S.~{Schiessl}, J.~{Gross}, M.~{Skoglund}, and G.~{Caire}, ``Delay performance
  of the multiuser {MISO} downlink under imperfect {CSI} and finite-length
  coding,'' \emph{IEEE Jour. Select. Areas in Comm.}, vol.~37, no.~4, pp.
  765--779, Apr. 2019.

\bibitem{makki:wcl:14}
B.~{Makki}, T.~{Svensson}, and M.~{Zorzi}, ``Finite block-length analysis of
  the incremental redundancy {HARQ},'' vol.~3, no.~5, pp. 529--532, Oct. 2014.

\bibitem{makki:wcl:19}
B.~{Makki}, T.~{Svensson}, M.~{Coldrey}, and M.~{Alouini}, ``Finite
  block-length analysis of large-but-finite {MIMO} systems,'' vol.~8, no.~1,
  pp. 113--116, Feb. 2019.

\bibitem{andrews:commag:10}
J.~G. {Andrews}, R.~K. {Ganti}, M.~{Haenggi}, N.~{Jindal}, and S.~{Weber}, ``A
  primer on spatial modeling and analysis in wireless networks,'' \emph{IEEE
  Comm. Mag.}, vol.~48, no.~11, pp. 156--163, Nov. 2010.

\bibitem{baccelli:inria}
F.~Baccelli and B.~Blaszczyszyn, \emph{{Stochastic Geometry and Wireless
  Networks, Volume I - Theory}}, ser. Foundations and Trends in
  Networking.\hskip 1em plus 0.5em minus 0.4em\relax {Now Publishers}, 2009,
  vol.~3.

\bibitem{scarlett:tit:17}
J.~{Scarlett}, V.~Y.~F. {Tan}, and G.~{Durisi}, ``The dispersion of
  nearest-neighbor decoding for additive non-{Gaussian} channels,'' \emph{IEEE
  Trans. Info. Th.}, vol.~63, no.~1, pp. 81--92, Jan. 2017.

\bibitem{park:twc:16}
J.~{Park}, N.~{Lee}, J.~G. {Andrews}, and R.~W. {Heath}, ``On the optimal
  feedback rate in interference-limited multi-antenna cellular systems,''
  \emph{IEEE Trans. Wireless Comm.}, vol.~15, no.~8, pp. 5748--5762, Aug. 2016.

\end{thebibliography}

\end{document}